%% file: main.tex
\documentclass[sigplan,10pt,nonacm]{acmart}

\setcopyright{none}
\AtBeginDocument{}

\usepackage{algorithm}
\usepackage{algpseudocode}
\usepackage{enumitem}
\usepackage{colortbl}

\hypersetup{
  pdftitle={JointServe: Serving Multi-Model Agents on Memory-Constrained GPUs},
  pdfauthor={Daijin Hu},
  pdfkeywords={Large Language Model Serving, Multi-Model Agents, GPU Memory Management, KV Cache}
}

\begin{document}

\title{JointServe: Serving Multi-Model Agents on Memory-Constrained GPUs}

\author{Daijin Hu}
\affiliation{\institution{Beihang University}
  \city{}
  \country{}}
\email{daijinhu@buaa.edu.cn}

\renewcommand{\shortauthors}{Daijin Hu}

\input{abstract}

\keywords{Large Language Model Serving, Multi-Model Agents, GPU Memory Management, KV Cache}

\maketitle

\thispagestyle{plain}
\pagestyle{plain}

\input{intro}
\input{background}
\input{overview}
\input{design}
\input{implementation}
\input{evaluation}
\input{discussion}
\input{related}
\input{limitations}
\input{conclusion}

\clearpage

\input{main.bbl}
\clearpage
\appendix
\input{appendix}

\end{document}

%% file: abstract.tex
\begin{abstract}
LLM agents are evolving from a single model calling tools to planners that orchestrate multiple expert models whose capabilities live in distinct weights. This shift breaks a hidden assumption of today's serving systems—that the working set of models fits in GPU memory. On the memory-constrained GPUs typical of private and edge deployments, naive whole-model swapping serializes weight transfer with computation and, without host-backed KV restoration, discards session state at a switch.

We present JointServe, a serving system that treats weight shards and KV blocks as residency objects competing for one unified memory budget. JointServe exploits three structural properties of agent workloads—directed calls, gapped execution, and dual-state residency—to prepare models before they are called: an intent signal parsed from the caller's own output stream triggers prefetch during generation, partial k* residency reduces switching to incremental layers, and a preemptive transfer scheduler protects live-session KV with a guaranteed bandwidth share. A carve-out control step adjusts the weight/KV allocation as workload demands change.

Evaluations show that JointServe reduces session completion time by 21.7\% over the
strongest whole-model swap baseline and stays usable in a crash regime where
every whole-model swap baseline fails to finish (a 46$\times$ gap). On
replayed SWE-bench, $\tau$-bench, and OSWorld traces, it cuts session
completion time by up to 49\% and lifts SLO attainment from
2.8--24.5\% to 76.8--100\%.
\end{abstract}

%% file: intro.tex
\section{Introduction}
\label{sec:intro}

Large language models are moving beyond single-turn question
answering toward multi-agent systems that plan, act, and revise over
long horizons~\cite{arxiv221003629,arxiv230803688,arxiv230713854,arxiv231112983}. Agents resolve repository issues
end to end~\cite{yang2024swe, liu2024large,arxiv240716741}, and multi-agent systems distribute
planning, coding, and reviewing roles across collaborating
models~\cite{hong2024metagpt,qian2024chatdev,tran2025multi,arxiv230808155,arxiv230317760}. A recurring architecture has emerged
from these systems: a general-purpose \emph{planner} model
orchestrates a pool of domain-specialized \emph{expert} models~\cite{arxiv230317580},
invoking an expert much as earlier agents invoked an external tool~\cite{arxiv230204761,arxiv230515334,arxiv240201030}.
The difference is where the capability lives. A tool exposes an
interface; an expert's coding, mathematical, or multimodal competence
is baked into its \emph{weights}, and no prompt can elicit a
capability from a model that is not loaded. Serving these workloads
therefore means serving many models on demand, not one model well.

Serving an expert pool requires memory for both model weights and
the state of active agent sessions. Private, edge, and shared-GPU deployments typically
offer one or two 80\,GB-class accelerators, while the combined
weights of even a modest expert pool overflow a single GPU.
Session state compounds the pressure: as an agent task deepens, the
key--value (KV) cache of each session grows with every turn, so
weights and KV blocks contend, physically and unavoidably, for the
same bytes of HBM. The cost of this contention lands on two
user-visible metrics that we use throughout the paper: time to first
token (TTFT), which governs the interactivity of each model call, and
session completion time (SCT), the wall-clock duration of an entire
agent task. The core challenge is not raw capacity; it is deciding
\emph{which bytes should reside} in scarce memory, and \emph{when
they should move}.

\begin{figure}[t]
  \centering
  \includegraphics[width=\columnwidth]{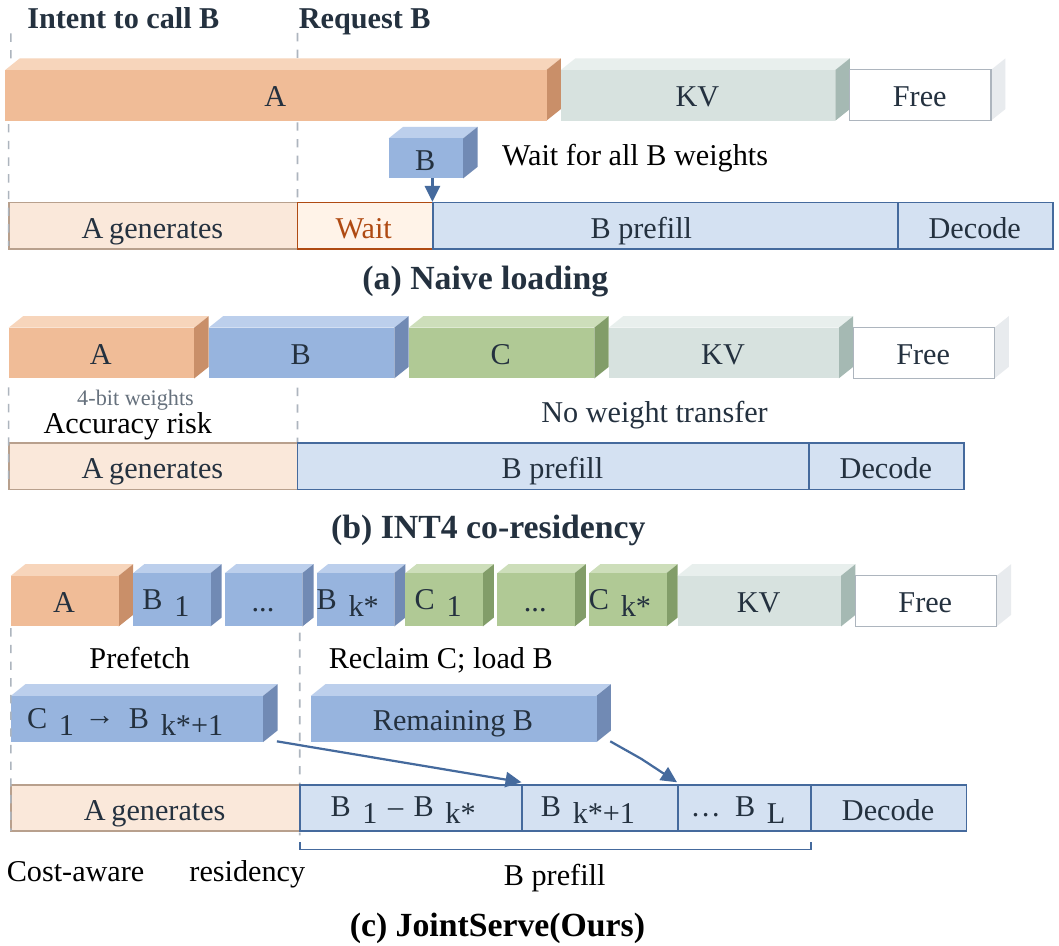}
  \caption{(a)~Naive loading: the request waits for the model weights to be transferred sequentially before prefill begins. (b)~INT4 co-residency: all quantized model weights remain simultaneously resident in GPU memory. (c)~JointServe: intent-guided shard prefetch before request arrival eliminates the waiting phase.}
  \label{fig:hero}
\end{figure}

What happens when the planner calls an expert that is not resident?
In a naive whole-model swap path, the serving stack must first make room in HBM, and then the entire
weight footprint of the target model streams in before prefill can
begin. This is a \emph{load-all barrier}: until the last layer is in
place, the first token cannot be produced
(Figure~\ref{fig:hero}(a)). Loading is not the only tax. When the
session later switches back to a model whose KV blocks were evicted
to make room without a host copy, that state is gone, and the context must be
re-prefilled from scratch. A single cross-model call therefore pays
twice: once in serial weight transfer on the way out, and once in KV
recompute on the way back. The first tax alone reaches second scale
on current hardware and grows linearly with model size, with the GPU
idle throughout; we quantify both taxes precisely in
\S\ref{sec:background}. These costs characterize the naive path in Figure~\ref{fig:hero}(a):
weights move at whole-model granularity, and evicted KV is not
preserved for fast restoration.

Prior work addresses different parts of this trade-off. \emph{Whole-model loading and session-aware
placement} reduce switch and return costs.
ServerlessLLM~\cite{fu2024serverlessllm} accelerates checkpoint
loading. Talaria~\cite{meng2026talaria} routes continuations using
model residency, KV locality, and instance pressure; soft reservations,
host-restorable KV, and staged weights preserve session continuity
across model switches. Talaria's cold pool still switches at model
granularity. JointServe instead selects resident weight layer groups
and KV blocks within one constrained GPU and prefetches missing groups
before a named call.
\emph{Quantized co-residency}
(Figure~\ref{fig:hero}(b)) removes swaps by shrinking weights~\cite{arxiv221110438,arxiv220807339,arxiv230603078,arxiv230607629,arxiv230813137}, but
merely relocates the contention: the bytes freed by quantization are
reclaimed by growing KV, the scheme fails outright when even
compressed weights do not fit, and it carries accuracy risk.
\emph{Same-family compression} shares a base model across adapters or
deltas~\cite{sheng2023s,yao2025deltazip,arxiv210609685,arxiv230514314,arxiv231018547}, an assumption that breaks in
heterogeneous expert ecosystems whose architectures, vocabularies,
and tokenizers differ. \emph{Passive caching}, from LRU eviction to
prefix reuse~\cite{zheng2024sglang}, decides what to keep from past reference
statistics alone; blind to which model comes next, it discards the
call-graph priors that agent workloads expose for free. These approaches improve different scheduling and memory costs, but
leave open how one memory-constrained GPU should jointly select weight
shards and session KV as demand changes.

This paper tackles exactly one question: under a single GPU's tight
memory budget, \emph{which bytes, weight shards or KV blocks, should
reside, when should they move, and on whose signal?} Our goal is to
have both: the capacity headroom of aggressive swapping and the
interactivity of full residency, without paying for one with the
other.

Talaria uses session-aware placement and admission across serving
instances~\cite{meng2026talaria}. JointServe instead manages weight
shards and KV blocks within one constrained GPU, using the caller's
named target to move bytes ahead of demand. Two
insights make this possible. \emph{First, residency is one problem,
not two.} Weight shards and KV blocks compete for the same memory
pool, so we model both as residency objects under a single budget
with explicit restore costs, and show that a fixed partition can become infeasible under the
measured crash budget. \emph{Second, agent call structure supplies
signals that passive statistics cannot.} A planner reveals its target
model while it is still generating the tool call, tool execution
leaves idle gaps between calls, and 96.4--99.9\% prefix reuse makes
preserved KV state highly valuable. Together
these observations turn the question of \emph{when} to move from a
prediction problem into a read-off problem.

JointServe occupies a point that complements, rather than competes
with, recent serving systems. BlitzScale~\cite{zhang2025blitzscale} scales
model instances at the cluster level; Pensieve, HCache, and
MORI~\cite{yu2025stateful,gao2025fast,xia2026idleness} tier the KV state of a single model
across the memory hierarchy; NanoFlow~\cite{zhu2025nanoflow} overlaps
compute, memory, and communication within an operator pipeline; and
DeltaZip~\cite{yao2025deltazip} compresses families of same-base
fine-tunes. JointServe manages the \emph{unified residency of
heterogeneous weights and KV within a single instance}.

Concretely, this paper makes three contributions.

\begin{itemize}
  \item \textbf{Unified residency formulation and adaptive
  split.} We formulate
  weight shards and KV blocks as same-domain residents of a single
  memory budget under an explicit cost model
  $c_i = \alpha + s/\mathrm{BW} + r(n)$, use a carve-out step to
  adapt the split during serving, and prove a $(1-\varepsilon)$
  recovery-value approximation for one selection snapshot of
  independent, additive objects.
  \item \textbf{Caller-generation proactive prefetching.} We extract agent call-graph semantics into
  residency signals. The planner's streaming tool-call header
  deterministically identifies the target model, moving the swap
  window from after request arrival to the caller's generation phase
  and tool gaps; a preemptive, chunk-granular transfer scheduler with
  a $\kappa$-floor bandwidth share keeps weight movement from
  starving KV restoration of live sessions.
  \item \textbf{Partial-shard streaming hot-swap and dual-state KV
  tiering.} Each of
  $N$ heterogeneous models keeps an analytically derived resident
  shard of $k^*$ layers with per-group readiness checks, so a swap
  streams only the remaining $L - k^*$ layers; the evaluated prototype
  switches by in-place overwrite, and a dual-state KV tier substitutes
  host-RAM reload for GPU recompute at a 20$\times$ cost advantage.
\end{itemize}

We implement JointServe on top of SGLang~\cite{zheng2024sglang} and
evaluate it across a range of real-world traces and datasets.
Compared with the strongest whole-model swap baseline, JointServe reduces
SCT by 21.7\% in the tight memory regime, stays feasible in a crash
regime where every whole-model swap baseline fails to finish (a
46$\times$ gap), and on real agent traces lifts SLO attainment from
2.8--24.5\% to 76.8--100\% while cutting SCT by up to 49\%.

%% file: background.tex
\section{Background and Motivation}
\label{sec:background}

\subsection{Model-as-Tool Agents and Serving Preliminaries}
\label{sec:bg-prelim}

Multi-agent systems increasingly follow a \emph{model-as-tool}
architecture: a general-purpose \emph{planner} model decomposes a
task, streams intermediate reasoning, and invokes domain
\emph{experts} through tool calls, which execute on the tool's output
and resume planning. Software-engineering agents call a coding model
behind a fixed agent--computer interface~\cite{yang2024swe}, and
multi-agent systems cast planning, implementation, and review as
roles played by different models~\cite{hong2024metagpt,qian2024chatdev}. The expert
pool assembled this way is genuinely heterogeneous, for three
recurring reasons. Domain experts are fine-tuned from different base
models and released independently; multimodal experts, such as vision
or code models, carry their own architectures, vocabularies, and
tokenizers; and legacy models remain in service long after their
training stacks are retired. Expert heterogeneity matters to a serving
system: it rules out sharing a single base weight across the pool,
which precludes adapter- and delta-based co-residency schemes. Recent agent-serving systems schedule
program structure and call dependencies well, but they assume that
every model an agent may invoke stays resident at some serving
endpoint~\cite{luo2025autellix,lin2024parrot}. JointServe starts where that
assumption breaks.

Serving engines such as vLLM and SGLang process each request in two
phases~\cite{kwon2023efficient,zheng2024sglang}. \emph{Prefill} consumes the prompt in one
forward pass and materializes its key--value (KV) state, which the
engine caches in HBM as paged blocks; \emph{decode} then generates
tokens autoregressively, reusing that state at every step. Unlike request-level latency models in the serving
literature~\cite{arxiv230505920,arxiv240302310,arxiv240109670,arxiv231118677}, which separate queuing from execution, a cross-model call
must also pay for memory movement. A
cross-model call traverses four latency components,
\begin{equation}
  T = T_{\text{queue}} + T_{\text{swap}} + T_{\text{restore}} + T_{\text{gen}},
  \label{eq:latency}
\end{equation}
where $T_{\text{queue}}$ is queuing delay; $T_{\text{swap}}$ is the
time to make room in HBM and transfer the target model's weights,
which delays the first token and therefore sets TTFT;
$T_{\text{restore}}$ is the time to recover the caller session's KV
state after a switch back, either by reloading preserved blocks from
host memory or by re-prefilling them on the GPU if they were dropped; $T_{\text{gen}}$ is the target
model's own execution, its prefill and decode, once the weights are
resident. TTFT therefore spans $T_{\text{swap}}$ up to the start of
target prefill. In \S\ref{sec:evaluation} we attribute observed
stalls to weight transfer, KV reload, and KV recompute under this
breakdown.

Weights and KV blocks ultimately contend for HBM. HBM sits on the
package and feeds tensor cores at full bandwidth, but its capacity is
fixed per device; host RAM and NVMe are larger by orders of magnitude,
yet narrower and slower by the same factor~\cite{arxiv230306865,arxiv240700079}. Only HBM can sustain
decode throughput, so weights and KV blocks must contend for its
constrained space. On an A800-80GB GPU over PCIe 4.0, the effective
host-to-device bandwidth is 24\,GB/s, so swapping in a 7B FP16 expert
(15.2\,GB) costs at least 0.63\,s with the GPU producing nothing. The
capacity arithmetic closes the argument: five experts spanning 7B to
14B parameters carry a combined FP16 weight footprint of 70--140\,GB,
which exceeds the 80\,GB of one card and approaches the budget of two.
Keeping the entire expert pool resident is therefore not a tuning
choice but arithmetic: under the one- or two-accelerator budgets of
private, edge, and shared-GPU deployments, full residency is out of
reach. One more card defers the wall rather than removing it, because
the pool grows with the number of experts served. Testbed and
measurement details appear in \S\ref{sec:evaluation}.

\subsection{Why Not Swap Whole Models Faster?}
\label{sec:bg-swap}

The most direct response to capacity pressure is to reduce switching
cost. ServerlessLLM~\cite{fu2024serverlessllm} accelerates checkpoint
loading, while Talaria~\cite{meng2026talaria} stages model weights
across cold-pool switches and preserves host-restorable session KV.
These methods reduce exposed transfer and recovery time. A conventional
whole-model reload, however, admits prefill only after the target
weights are ready. When the weights have not been staged, moving the
whole target model remains on the first-token path. JointServe instead
keeps a startup prefix resident and streams later layer groups as
execution proceeds.
Figure~\ref{fig:swap-transfer} quantifies the tax. Swapping in a 7B
FP16 expert (15.2\,GB) at the full 24\,GB/s host-to-device bandwidth
costs 0.63\,s with the GPU producing nothing, and the serial swap plus
prefill reaches 0.74\,s of TTFT; both grow linearly with model size. The default
engine reload path, which stops the instance, streams the full
checkpoint off NVMe into host memory, copies it over PCIe into HBM,
and re-initializes the runtime before serving resumes, is two orders
of magnitude worse at 67.76\,s for a 7B model: it serializes disk
reads, host-to-device copies, and engine re-initialization,
overlapping none of them with compute and serving no layer before the
last one lands.

\begin{figure}[t]
  \centering
  \includegraphics[width=\columnwidth]{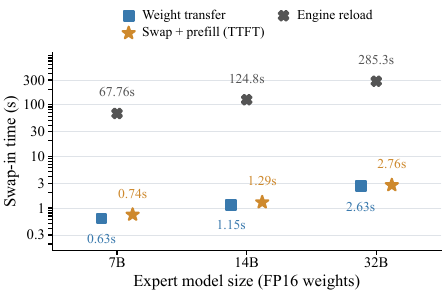}
  \caption{Overhead of reloading model weights.}
  \label{fig:swap-transfer}
\end{figure}

Faster transfer removes only the outbound tax, the serial weight
transfer before the switch; the return tax is untouched. The swap path
neither notices nor protects in-flight KV state, so blocks evicted to
make room are gone when the session switches back, and the full
context must be re-prefilled.
Figure~\ref{fig:kv-recompute} shows that this recompute cost grows quadratically with context length. At the median
$\tau$-bench context of 2.7K tokens it is still modest (0.18\,s), but
at the median SWE-bench context of 27.5K tokens one switch-back costs
2.64\,s, roughly four whole-model transfer lower bounds, and at the
p95 context of 61.8K tokens it reaches 8.89\,s. Restoring the same
state from host memory instead costs 0.05\,s, nearly independent of
length.

\begin{figure}[t]
  \centering
  \includegraphics[width=\columnwidth]{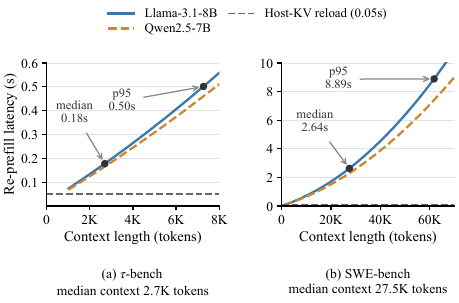}
  \caption{Re-prefill latency after KV-cache eviction as a function of evicted context length.}
  \label{fig:kv-recompute}
\end{figure}

The final limit is capacity, and Figure~\ref{fig:capacity-wall}
shows the wall forming well before the largest model. The figure
scales both JointServe and the whole-model route across model sizes on
a 72\,GB effective budget. At 14B the whole-model route still fits and
JointServe cuts SCT p50 by 65\% (978\,s to 343\,s); at 32B the margin
narrows to 23\% (1962\,s to 1514\,s) as the co-residency window
shrinks; at 70B, whose FP16 weights total 140\,GB, the whole-model
route is not slow but \emph{infeasible}, while a resident shard of
$k^* = 23$ layers (42\,GB) keeps JointServe serving. And the wall only
closes in as models grow, because weight footprints scale with width
and depth while the card's budget does not.
The verdict is that swapping faster is not swapping less,
and swapping has a capacity ceiling: the route pays both transfer and
recompute, link optimization touches only the first, and once weights
exceed the budget of a card the route ceases to exist. End-to-end
evidence that tight budgets still leave it 21.7\% behind, and that
collapse-budget workloads stall it entirely, appears in
\S\ref{sec:evaluation}.

\begin{figure}[t]
  \centering
  \includegraphics[width=\columnwidth]{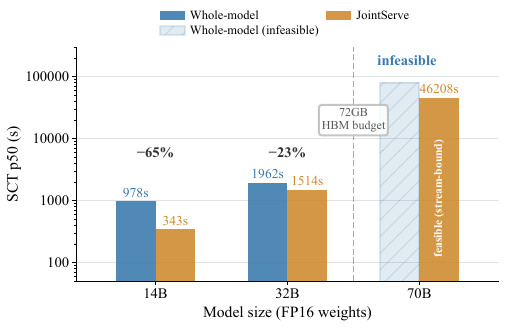}
  \caption{SCT p50 of whole-model swapping and JointServe across model sizes (72\,GB HBM budget).}
  \label{fig:capacity-wall}
\end{figure}

\smallskip
\noindent\textbf{O1: Calls are directed, not predicted.}\label{sec:bg-opp}
The streamed tool-call header identifies the next expert before request arrival, allowing prefetch during caller generation; 54\% of SWE-bench tool gaps precede a switch to a different tool.

\subsection{Why Not Shrink Models to Co-Reside?}
\label{sec:bg-quant}

A second family of designs avoids swapping altogether by making the
weights small enough that the whole pool fits at once: quantize every
expert to four bits, or share one base model across many
adapters~\cite{lin2024awq,frantar2022gptq,sheng2023s,yao2025deltazip}. Shrinking the weights,
however, merely relocates the contention rather than removing it.
Figure~\ref{fig:quant}(a) makes the arithmetic concrete at a 32\,GB
effective budget. A whole-model swapper locks 15.2\,GB for one FP16
expert and leaves 16.8\,GB for KV; the INT4 co-resident scheme locks
21.1\,GB for \emph{all five} experts and leaves only 10.9\,GB for KV.
Compression has pinned more bytes in weights that may sit idle for
whole stretches of a session, and the space left to the state that
grows without bound shrinks. Nor does accuracy come for free:
low-bit quantization degrades mathematical and code competence, a
cost documented in the quantization literature~\cite{lin2024awq,frantar2022gptq}.
Adapter- and delta-based sharing is bounded by an even earlier
constraint: it presumes a shared base model, which the heterogeneous
expert pool does not have.

\begin{figure}[t]
  \centering
  \includegraphics[width=\columnwidth]{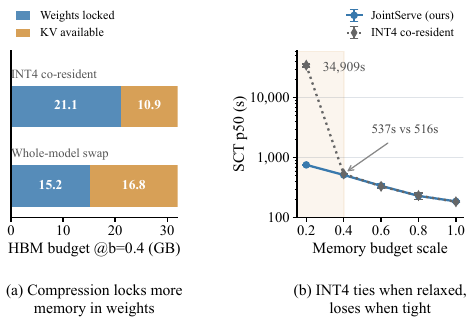}
  \caption{Shrinking weights relocates the contention.}
  \label{fig:quant}
\end{figure}

Figure~\ref{fig:quant}(b) compares INT4 co-residency against
JointServe across effective memory budgets, expressed as a fraction
$b$ of the 80\,GB card. When the budget is generous both coincide,
because the contention the scheme was built to avoid never
materializes; at a 32\,GB budget ($b = 0.4$) the co-resident scheme is
slightly worse (537\,s versus 516\,s), having traded KV headroom for
resident weights without buying anything. At a 16\,GB budget
($b = 0.2$) it fails outright, and the failure is structural: the
five-expert INT4 pool needs 21.1\,GB, which no longer fits, so
co-residency degenerates into swap-on-demand, the very mechanism it
was built to replace, and makespan collapses to 34{,}909\,s, within
5\% of the naive whole-model baseline, while JointServe completes the
same workload in 758\,s. Compression subtracts from one side of the
weight--KV contest; it never cancels it.

\smallskip
\noindent\textbf{O2: Execution is gapped.}
In the SWE-bench gap distribution shown in Figure~\ref{fig:opp}, a 0.14\,s incremental transfer fits 98\% of gaps versus 28\% for a 0.63\,s whole-model transfer.

\begin{figure}[t]
  \centering
  \includegraphics[width=\columnwidth]{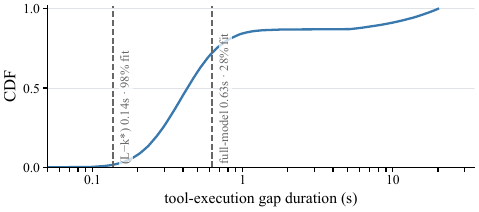}
  \caption{Cumulative distribution of tool-execution gaps in SWE-bench. Vertical lines mark the times required to transfer the incremental $(L-k^*)$ layers and the whole model.}
  \label{fig:opp}
\end{figure}

\subsection{How Does Session-Aware Routing Differ?}
\label{sec:bg-route}

Talaria~\cite{meng2026talaria} treats agent continuations as stateful.
Its router weighs model residency, KV locality, and instance pressure;
soft reservations preserve likely return affinity, while instance-local
host KV can be restored across model switches. Cold-pool instances
time-share multiple models and stage weights, so Talaria neither
requires a full replica of every model nor discards session KV at each
switch. Its main decisions concern where and when a session runs in a
serverless pool. JointServe addresses a complementary decision within
one memory-constrained GPU: which layer-group weight shards and session
KV blocks share HBM, and which missing groups to prefetch using the
caller's intent. The two policies act at different scheduling levels;
their integration is outside this paper's evaluation.

Whole-model staging, compression, and session-aware routing address
transfer, weight footprint, and cross-instance continuity, respectively;
JointServe instead selects weight shards and KV blocks jointly within
one GPU and combines this with intent-driven prefetch and partial
residency (\S\ref{sec:overview}).

\smallskip
\noindent\textbf{O3: Eviction is not destruction.}
Across four traces, prefix reuse is 96--100\%; at the SWE-bench median context, host KV reload takes 0.05\,s versus 2.64\,s for re-prefill, making host RAM a useful second residency tier.

%% file: overview.tex
\section{Unified Residency Management}
\label{sec:overview}

JointServe selects weight shards and KV blocks under a shared HBM budget using agent intent, reuse estimates, and recovery costs. This section defines the objective, joint budget rule, and online selection procedure.

\begin{figure*}[t]
  \centering 
  \includegraphics[width=\textwidth]{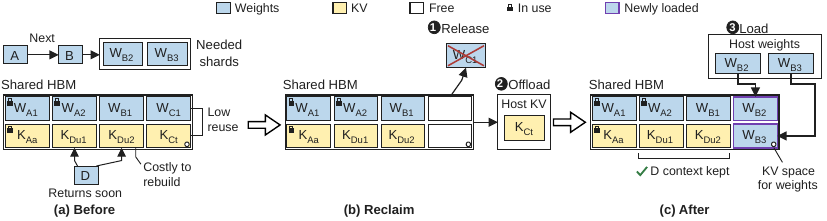}
  \caption{An illustration of joint residency selection and update.}
  \label{fig:joint-residency}
\end{figure*}

\subsection{Unified Residency Model}
\label{sec:model}

Figure~\ref{fig:joint-residency}(b) shows the candidates
$\mathcal{C}=\mathcal{W}\cup\mathcal{K}$: independently managed weight
layer groups ($\mathcal{W}$) and session KV units with known recovery
sources ($\mathcal{K}$). Each retains its model or session identity,
so KV for one model cannot serve another.

The budget $B$ excludes runtime, activation, and transfer buffers.
Protected objects $\mathcal{P}$ reserve $P_W$ weight bytes and $P_K$
KV bytes, leaving $B_f=B-P_W-P_K$ for selection. They include in-flight
session state and the arriving model's committed streaming prefix;
selection cannot evict them. If $\mathcal{P}$ alone exceeds $B$, the
execution layer handles the shortfall.

Candidate $i$ has size $s_i$, reuse probability $p_i$ within the
decision window, and cheapest-path recovery cost $c_i$. Its retention
value $u_i=p_ic_i$ is expected avoided recovery work, and its density
is $d_i=u_i/s_i$. The window counts one avoided recovery per object;
for repeated reuses we use expected counts.

The residency problem maximizes
\begin{equation}
  \max_{S \subseteq \mathcal{C},\; s(S) \le B_f} \; U(S),
  \qquad U(S) = \textstyle\sum_{i \in S} u_i ,
  \label{eq:objective}
\end{equation}
where $s(S)=\sum_{i\in S}s_i$. JointServe solves this objective using
estimated values $\hat{u}_i=\hat{p}_i\hat{c}_i$.
Appendix~\ref{app:oracle} bounds the effect of estimation error
on the selected set's recovery benefit. Translating that benefit
into lower TTFT and SCT also requires timely transfers and overlap
with computation (Appendix~\ref{app:latency}).

Equation~\eqref{eq:objective} assumes independent candidates and
additive values, which \S\ref{sec:selection} optimizes. Shared-prefix
recovery or mutually exclusive legal prefix and layer-group states
instead require complete legal plans per model or session.

\subsection{Eviction and Recovery Costs}
\label{sec:costs}

Weight objects reload from host. KV objects reload if a valid host copy
exists; otherwise they recompute the missing range. Eviction may also
require write-back, whose source buffer remains occupied until the
transfer finishes. These recovery paths enter the cost model.

Reload costs combine a transfer startup overhead with the time to
move the object's bytes. Recompute costs depend on the missing
token range and the model's prefill curve:
\begin{equation}
  \begin{gathered}
  c_i^{W} = \alpha_W + s_i/\mathrm{BW}_{H2D}, \\
  c_i^{K,\mathrm{load}} = \alpha_K + s_i/\mathrm{BW}_{H2D}, \\
  c_i^{K,\mathrm{recompute}} = \Delta r_m(n_i^{\mathrm{ready}},n_i),
  \end{gathered}
  \label{eq:costs}
\end{equation}
where $\alpha_W$ and $\alpha_K$ are transfer startup costs and
$\mathrm{BW}_{H2D}$ is effective host-to-device bandwidth.
For a context of $n_i$ tokens with a usable prefix of
$n_i^{\mathrm{ready}}$ tokens, $\Delta r_m$ is the recovery cost of
the missing suffix. We model a full-prefix recovery as
$r_m(n)=a_{0,m}+a_mn+b_mn^2$ and charge a partially missing unit
only the increment over its range. Shared recovery work, including
startup, is counted once. Startup overhead and quadratic prefill
cost can make recovery value per byte non-monotonic in prefix
length; Appendix~\ref{app:cost-density} derives this relationship.

\begin{figure*}[t]
  \centering
  \includegraphics[width=\textwidth]{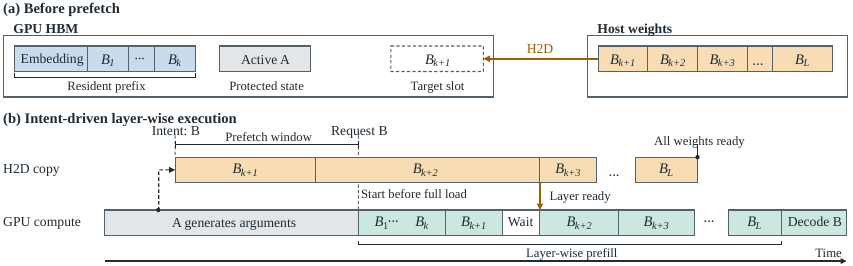}
  \caption{Intent-driven partial residency and layer-wise execution.}
  \label{fig:intent-streaming}
\end{figure*}

Equal-size objects can have different retention values: a
recompute-bound KV block may outweigh a host-reloadable block or weight
shard because prefill cost grows with missing context
(Figure~\ref{fig:kv-recompute}). JointServe ranks them by estimated
avoided recovery per byte, $\hat p_i\hat c_i/s_i$, using the reuse
signals of \S\ref{sec:signals} and the costs above.

\subsection{Joint Budget Allocation}
\label{sec:budget}

JointServe selects objects under the shared capacity rather than
fixing a weight--KV split. For each slow-path candidate, it adds the
best use of residual capacity; this decomposition is exact for
additive values and needs no continuity or equal-marginal assumption
(Appendix~\ref{app:decomposition}).

The slow path forms a frontier of representatives $(R_q,s_q,v_q)$
from weights and compressed high-value KV: each stores a retained set,
its true size, and its estimated value. For each rounded-value state,
it keeps the smallest set, including the empty set; value scaling
bounds the state count while sizes stay exact (Appendix~B.3). The fast
path queries $F_L(r)$ for the low-value KV set fitting residual
capacity $r$.

Selection maximizes over the frontier,
\begin{equation}
  S^* = \arg\max_{R_q}\; \hat{U}(R_q) +
        \hat{U}\!\left(F_L(B_f - s_q)\right),
  \label{eq:joint}
\end{equation}
using true sizes and same-window values. High-value KV remains KV bytes. The winner sets
$b_W=P_W+s(S^*\cap\mathcal{W})$ and $b_K=B-b_W$; the KV quota includes
protected KV, selected blocks, and slack. These quotas are derived
afresh when intent, estimates, protection, or capacity change. Each
event rebuilds the frontier and KV index, then checks protection
feasibility and the approximation bound.

\begin{algorithm}[t]
  \caption{Joint Residency Selection}
  \label{alg:joint}
  \begin{algorithmic}[1]
    \Require candidates $\mathcal{W}, \mathcal{K}$; protected set
    $\mathcal{P}$; free bytes $B_f$; scores $\hat{u}$; budgets
    $\varepsilon_C, \varepsilon_S, \varepsilon_K > 0$,
    $\varepsilon_C + \varepsilon_S + \varepsilon_K < 1$
    \State \textbf{if} protection $\mathcal{P}$ is infeasible
    \textbf{then return} \textsc{CapacityShortfall}
    \State drop individually infeasible or nonpositive-value candidates
    \State \textbf{if} $B_f \le 0$ or no candidates remain
    \textbf{then return} target $\mathcal{P}$
    \State $G \gets$ value of the feasible density prefix over
    $\mathcal{W} \cup \mathcal{K}$; $M \gets$ best feasible singleton;
    $L \gets \max(G, M)$
    \State split $\mathcal{K}$ at $\tau = \varepsilon_K L$ into
    $\mathcal{K}_H$ and $\mathcal{K}_L$
    \State $\mathcal{K}_H' \gets
    \textsc{CompressByValueAndSize}(\mathcal{K}_H, \tau, \varepsilon_C,
    \varepsilon_K)$ \Comment{Appendix~\ref{app:construction}}
    \State $\mathcal{R} \gets
    \textsc{MinBytesValueFrontier}(\mathcal{W} \cup \mathcal{K}_H',
    B_f, \varepsilon_S)$ \Comment{Appendix~\ref{app:frontier}}
    \State $F_L \gets \textsc{DensityPrefixIndex}(\mathcal{K}_L)$
    \State $\mathit{best} \gets \emptyset$
    \ForAll{representatives $R_q \in \mathcal{R}$, including $\emptyset$}
      \State $K_q \gets F_L.\texttt{query}(B_f - s(R_q))$
      \If{$\hat{U}(R_q \cup K_q) > \hat{U}(\mathit{best})$}
        \State $\mathit{best} \gets R_q \cup K_q$
      \EndIf
    \EndFor
    \State target $\gets \mathcal{P} \cup \mathit{best}$; derive $b_W,
    b_K$ by object type
    \State \Return target with its value and budget metadata
  \end{algorithmic}
\end{algorithm}

\subsection{Cost-Aware Residency Selection}
\label{sec:selection}

Algorithm~\ref{alg:joint} sets $G$ to the value of the feasible
density prefix ending before the first nonfitting candidate and $M$
to the best feasible singleton. For $L=\max(G,M)$,
$L\le\widehat{\mathrm{OPT}}\le2L$ (Appendix~\ref{app:scale}),
which sets the compression scale. Infeasible protection returns a
shortfall; if $B_f\le0$ or no positive-value candidate fits, the
procedure returns only $\mathcal P$.

The threshold $\tau=\varepsilon_KL$ separates high-value KV
($\hat u_i>\tau$) from low-value KV. In each geometric value bucket,
JointServe keeps the $h=\lceil2/\varepsilon_K\rceil$ smallest
high-value objects. Any optimum has fewer than $2/\varepsilon_K$
such objects, so these representatives preserve a feasible replacement
with at most $\varepsilon_C$ loss in that subset's value; Supplementary
Material, \S A.8 bounds the candidate count. Compression changes the
search space, not physical residency. Dependencies instead require
complete legal options per model or session.

A value-scaling dynamic program stores the smallest representative at
each rounded value for weights and compressed $\mathcal K_H'$;
$\varepsilon_S$ controls rounding. The low-value KV index sorts by
density and uses cumulative sizes and values to return the longest
fitting prefix $F_L(r)$ by binary search. Filling later fitting
objects may add value, but the guarantee uses this prefix.
Equation~\eqref{eq:joint} selects the target; Appendix~C gives the
exact Top-$k$ case for equal-size KV and the grouped-plan dynamic
program.

In Figure~\ref{fig:joint-residency}, intent names model B and session
D updates its reuse estimate; selection retains A's protected state
and D's context while admitting B's shards.

The target induces loads, releases, and offloads relative to current
residency; unchanged objects stay. Write-back completes before its
source buffer is released. The runtime in \S\ref{sec:design} commits
feasible transfers and reports capacity shortfalls without replacing
protected objects.

For a fixed set of independently selectable candidates with
nonnegative additive values and one capacity constraint,
Algorithm~\ref{alg:joint} returns a feasible set $S_A$ satisfying
$\hat U(S_A)\ge(1-\varepsilon)\widehat{\mathrm{OPT}}$, where
$\varepsilon=\varepsilon_C+\varepsilon_S+\varepsilon_K<1$.
Representatives use no more space than the sets they replace, and
compression, rounding, and prefix selection each consume their
assigned error budget. Appendix~\ref{app:proof} gives the proof;
Appendix~\ref{app:estimation} relates the estimated objective to
true expected recovery benefit.

%% file: design.tex
\section{JointServe System Design}
\label{sec:design}

\begin{figure*}[t]
  \centering
  \includegraphics[width=\textwidth]{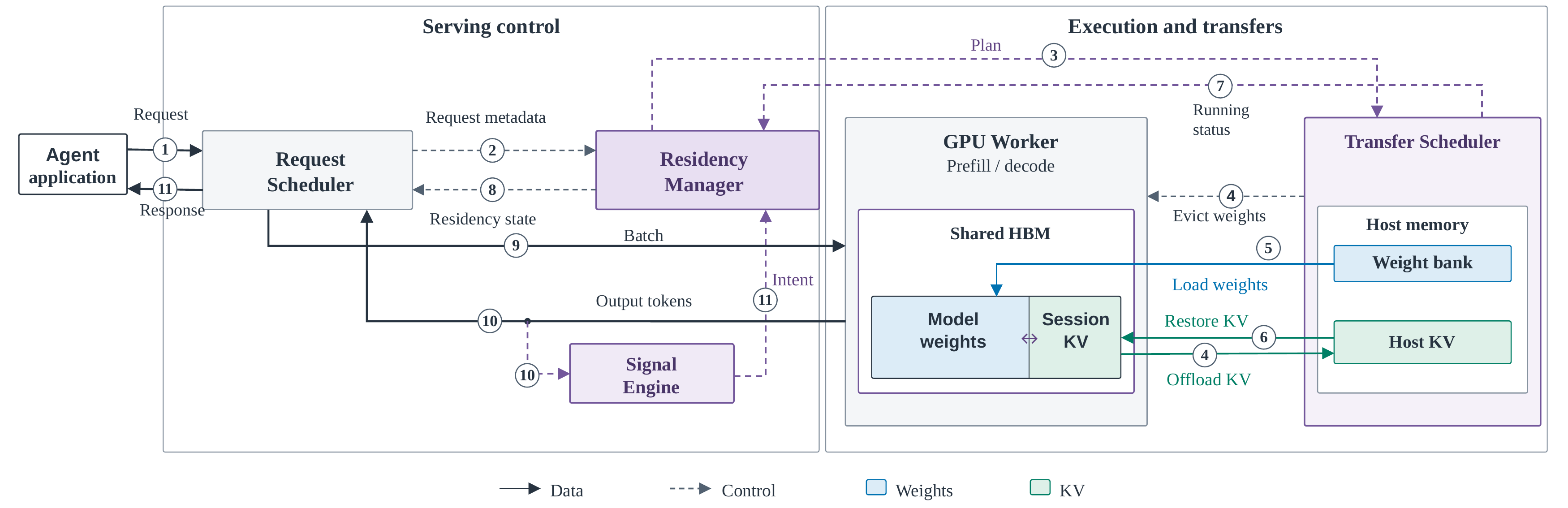}
  \caption{JointServe system overview.}
  \label{fig:architecture}
\end{figure*}

JointServe turns the residency selection of \S\ref{sec:overview}
into transfers and execution through the architecture in
Figure~\ref{fig:architecture}. The Request Scheduler, Residency
Manager, and Signal Engine form the serving-control layer, which
produces versioned residency plans. The GPU Worker and Transfer
Scheduler execute those plans across shared HBM and host memory.
This separation lets the residency policy operate independently of
the loading back end. 

Executing a residency plan requires making room for the next model while preserving the state of in-flight sessions. The Residency Manager tracks the selected target separately from the objects ready for execution. Within the shared HBM budget, the Transfer Scheduler overlaps weight loading with KV movement where dependencies permit and reserves bandwidth for live-session KV traffic. The GPU Worker begins prefill once the startup layer groups and required session KV are resident, while later groups stream in alongside computation. Per-group ready events enforce the remaining dependencies, allowing execution to begin before the full model arrives.

\subsection{Signal-Driven Model Anticipation}
\label{sec:signals}

Agent call structure turns which model comes next from a prediction
problem into a read-out, and the Signal Engine's job is narrow: it
produces exactly the same-window inputs that
Algorithm~\ref{alg:joint} consumes. Signals follow a quality
ordering. An explicit intent is near-certain evidence, a
session-level online Markov transition is weaker, and popularity or LRU
recency is the fallback. The intent tap sits on the GPU Worker's
output stream toward the Request Scheduler
(Figure~\ref{fig:architecture}, step 11), a bypass that observes
without rerouting: a lightweight lexical matcher locks the target
model as soon as the caller streams the function-call header, and
fan-out headers generalize to a candidate set for next-set
prefetch. This realizes the advance of opportunity O1
(\S\ref{sec:bg-opp}) from request arrival into the caller's own
generation phase, and it opens the prefetch window of
Figure~\ref{fig:intent-streaming}.

Without an explicit intent, the engine falls back to a first-order
Markov model over call transitions, updated asynchronously with
Laplace smoothing as calls complete, and finally to popularity. A
misprediction costs bounded work, not correctness, because bytes
fetched for the wrong target are priced by the same value model and
evicted like any candidate, so the worst case degrades to the
passive baseline. For each candidate, the Signal Engine supplies a
reuse estimate $\hat{p}_i$ over the decision window defined in
\S\ref{sec:model}. The Residency Manager combines this estimate
with the candidate's recovery-path cost $\hat{c}_i$ to obtain
$\hat{u}_i=\hat{p}_i\hat{c}_i$ for Algorithm~\ref{alg:joint}.
Appendix~\ref{app:estimation} bounds how errors in these inputs
affect the selected set's expected recovery benefit.p.

\subsection{Executing Residency Plans}
\label{sec:plans}

A plan is a differential, not a re-derivation:
Algorithm~\ref{alg:joint} hands the Residency Manager a target
residency set, its per-type quotas, and a diff of actions, release
or offload displaced objects and load selected ones
(Figure~\ref{fig:architecture}, steps 3--6). Execution rests on one
identity and one registry: every object keeps its model or session
key end to end, and the host side registers a read-only weight bank
plus a host KV store keyed by model.

Commit is a three-step protocol: the delta is dispatched, with
evictions and offloads running in parallel with loads and restores
(steps 4--6). The Transfer Scheduler reports startup readiness once
the groups needed to begin execution are resident (step 7), and the
manager atomically switches the residency mapping and hands control
back to the Request Scheduler (steps 8--9). Target set and ready
set are tracked as distinct states: a new mapping is never
committed before reclamation finishes and the target copies are in
place, and the protected set of \S\ref{sec:model} is as unevictable
here as it was in selection. If a plan is infeasible or capacity
falls short, the old mapping is retained and the request returns to
the decision layer, the runtime counterpart of the
\textsc{CapacityShortfall} branch of Algorithm~\ref{alg:joint}.

The plan determines which objects move; the Transfer Scheduler
determines their order. Per-group ready
events from the runtime establish when the startup prefix and
required session state can be used.
Remaining groups must arrive by their first use; otherwise,
execution waits on the corresponding ready event. These execution checks preserve the
selected target while tracking which parts are ready.

\subsection{Preemptive Transfer Scheduler}
\label{sec:transfer}

Three traffic classes share one host-to-device channel: target-model
prefetch, the streaming shard refill of \S\ref{sec:streaming}, and
background KV offload and restore. Left unordered, a
gigabyte-scale weight burst would starve the KV restoration of
sessions already in service.

The scheduler enforces a static priority order, target prefetch
above pipeline shard stream above background KV, and chunks weight
transfers at layer-group granularity so a higher-priority transfer
preempts an in-flight one at chunk boundaries. A $\kappa$-floor,
15\% of link bandwidth by default, reserves a minimum share for the
KV traffic of live sessions: a lower bound on what KV restoration
receives, not an upper bound on what weights may use when the
channel is otherwise idle. Per-task readiness reports aggregate
into the transfer status of step 7 in
Figure~\ref{fig:architecture}. The scheduler only orders the deltas
that \S\ref{sec:plans} hands down and never alters the target set.
Its priority input comes from the Signal Engine, whose intent marks
the highest level, so signal, decision, and transfer depend on one
another in a single direction.

\subsection{Sharded Runtime and Streaming Hot-Swap}
\label{sec:streaming}

Partial residency refines the load-all barrier of
\S\ref{sec:bg-swap} into per-group dependencies. The runtime keeps a
startup prefix of $k^*$ layer groups resident per model and checks
each candidate prefix against per-group load times, first-use
deadlines, and buffer lifetimes. Sufficient conditions under aligned
release times and one reserved channel appear in
Appendix~\ref{app:deadlines}. Execution then proceeds on
ready events: a group computes once its copy completes, and any
residual wait is exposed rather than assumed away.

Figure~\ref{fig:intent-streaming} follows one hot swap. While
caller A is still generating, its intent opens the prefetch window
(a): B's incremental layer groups move over the host-to-device
link, overlapped with A's decode, and the window closes when B's
request arrives. Prefill starts immediately on the resident layers
(b), every later group executes only after its own copy completes,
and the whole model is resident before decode begins. A
shadow-buffer ping-pong flip is the target hand-off between
prefixes, while the current prototype overwrites in place, a
difference \S\ref{sec:implementation} states concretely. The Load
edge of Figure~\ref{fig:joint-residency}(b) is this figure's layer
stream: the space and time views of the same objects.

Sessions recover along the same plan-driven paths: a switch back
reloads its KV from host memory (step 6 of
Figure~\ref{fig:architecture}), reload versus recompute is decided
online by the cost model of \S\ref{sec:costs}, and offload is
ordered before release, the execution-layer form of the staged
accounting in \S\ref{sec:costs}. Host capacity is managed under the
same value model, and the $\kappa$-floor of \S\ref{sec:transfer}
protects the restore traffic of sessions in service.

Layer-granular loading is an established lineage:
PipeSwitch~\cite{258864} pipelines one model's layers over PCIe
for context switching, ZeRO-Inference~\cite{10046087} streams
a single oversized model from CPU or NVMe memory, and
$\lambda$Scale~\cite{yu2026lambdascaleenablingfastscaling} and BlitzScale~\cite{zhang2025blitzscale}
amortize loading across a cluster. All four move the layers of one
model, or share one model's loading cost across instances. None
keeps several heterogeneous models partially resident in one memory
pool and reassembles the working set per call. The two directions
are orthogonal and compose: a single instance can run the
$k^*$-prefix scheme here while the cluster layer reuses a multicast
loading path.

%% file: implementation.tex
\section{Implementation}
\label{sec:implementation}

JointServe is implemented on SGLang~\cite{zheng2024sglang}: the system
core is about 1.5K lines of Python. Three hook groups form the integration surface.
\texttt{submit\_plan(target, delta)} hands the runtime one versioned
residency plan: a target set plus the release, offload, load, and
restore actions that reach it. \texttt{ready\_event} reports one
layer group's copy completion to the commit protocol.
\texttt{kv\_load} and \texttt{kv\_offload} move session KV between
the HBM and host tiers, and the intent tap observes the engine's
output stream without rerouting it.

The shard loader reads layer groups from a pinned, read-only host
weight bank and copies them host-to-device on a dedicated stream
that overlaps compute; page-locked staging keeps the copy at link
rate, and the per-group events aggregate into the startup-readiness
report that the commit protocol of \S\ref{sec:plans} waits on. Two
residency managers, one for weight shards and one for KV blocks,
hold the residency maps keyed by model and session. The stock reload
path re-reads a full checkpoint from disk and admits execution only
after the entire state dict is resident, so JointServe bypasses the
engine loader and writes weight buffers directly from host memory.

The Signal Engine, Residency Manager, and Transfer Scheduler run as
asynchronous threads that consume only the output-stream tap and
engine heartbeats, never the request critical path. Plan deltas
execute concurrently, and only the final mapping switch is
serialized. Priorities and the $\kappa$-floor are chunked queues
inside the transfer channel, and preemption happens at chunk
boundaries. Amortized control-plane overhead stays below 0.1\% of a
decode step (\S\ref{sec:evaluation}).

%% file: evaluation.tex
\section{Evaluation}
\label{sec:evaluation}

We evaluate the full JointServe implementation against five baselines
on synthetic and real agent workloads. Three
results summarize the chapter:

\begin{itemize}[leftmargin=1.2em,itemsep=1pt,topsep=2pt]
  \item Across memory budgets, JointServe cuts median session
  completion time (SCT) by 21.7\% against the strongest whole-model
  baseline in the tight regime, and remains feasible at a crash budget
  where every whole-model baseline stalls past the measurement window.
  \item On real agent traces, JointServe raises SLO attainment on
  $\tau$-bench from 24.5\% to 100\% and cuts SWE-bench SCT by 49.1\%.
  \item The mechanism itself is fast: a streaming hot-swap changes the
  resident expert in 1.83\,s where an engine reload takes 67.76\,s
  (37$\times$), and the control plane amortizes to under 0.1\% of a
  decode step.
\end{itemize}

\subsection{Experimental Setup}
\label{sec:eval-setup}

\textbf{Testbed.} A single A800-SXM4-80GB GPU on PCIe~4.0, backed by
1\,TB of host RAM and NVMe SSDs. Measured effective host-to-device
bandwidth is 24.3\,GB/s; per-transfer startup constants are
$\alpha_W{\approx}0$ for weights and $\alpha_K{=}0.707$\,ms for KV
blocks. Memory budgets $b$ are fractions of the 80\,GB
HBM; 8\,GB is reserved for the runtime, so $b{=}0.4$ leaves a 24\,GB
usable budget.

\textbf{Model catalogs.} Three size tiers turn model scale into an
independent variable. The standard pool holds five 7B-class experts
(Llama-3.1-8B as planner plus Qwen2.5-7B-Instruct, Qwen2.5-Coder,
Qwen2.5-Math, and Mistral-7B; 15.2\,GB per expert in FP16), so the
pool fits no single-GPU budget and the tight regime is the main
battleground. The 14B tier (28--30\,GB per expert) lets one expert
start to squeeze the budget; the 32B tier (64--66\,GB per expert)
approaches the whole card, where whole-model residency stops being
physically possible.

\textbf{Workloads.} Three load families. Synthetic sessions with
lognormal depths and lengths, parameterized from published agent-trace
characterizations~\cite{meng2026talaria}, drive the budget sweep. A released
coding-agent trace (TraceLab~\cite{zhu2026tracelab}, 665K turns) is replayed
as a production-shaped check. Three public agent benchmarks are
measured directly: SWE-bench~\cite{jimenez2024swe},
$\tau$-bench~\cite{taubench}, and OSWorld~\cite{xie2024osworld}.

\subsection{Baselines}
\label{sec:eval-baselines}

Our System uses the five baselines:

\textbf{vLLM}~\cite{kwon2023efficient} restarts the engine on every model switch,
the naive path. 

\textbf{SGLang}~\cite{zheng2024sglang} re-reads weights from
NVMe and hot-reloads them without a restart, with HiCache-style LRU
tiering for KV state; this is the default path of a mainstream
serving framework. 

\textbf{ServerlessLLM}~\cite{fu2024serverlessllm} is the
fastest published loading stack; we configure it at its official
23\,GB/s loading bandwidth and additionally compare against an
idealized host-resident variant that pays only the H2D transfer, which
serves as an optimistic upper bound for the whole-model family.

\textbf{INT4-CoResident} keeps 4-bit weights of all experts resident
and never swaps, the compression-only route.

\textbf{INT4-Swap} combines 4-bit weights with whole-model swap, the
naive combination of the two routes. We report JointServe in two forms: JointServe
(FP16, lossless) and JointServe-INT4 (sharding composed with
quantization).

\subsection{Evaluation Metrics}
\label{sec:eval-metrics}

\textbf{SCT} is the
primary metric: how long a user waits for a whole agent task.
\textbf{TTFT} captures interactivity at the request level, and
\textbf{SLO attainment} converts it to a production-readiness number:
the fraction of requests with TTFT under 5\,s, a threshold in line
with what interactive agent workloads tolerate. \textbf{Stall
decomposition} attributes per-request waiting to three sources,
weight-switch, KV-reload, and KV-recompute, following the latency
anatomy of a cross-model call. \textbf{Model-switch count} exposes
behavior directly: whether partial residency eliminates switching
altogether.

\begin{figure}[t]
  \centering
  \includegraphics[width=\columnwidth]{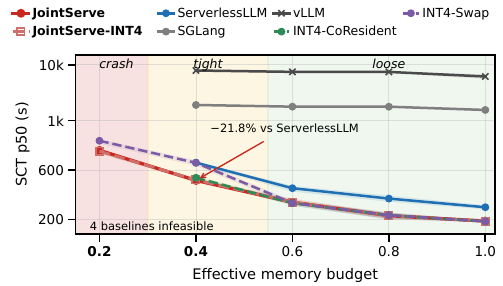}
  \caption{SCT p50 across memory budgets $b$ ($b$ is the fraction of one 80\,GB GPU's memory capacity).}
  \label{fig:budget-sct}
\end{figure}

\subsection{End-to-End Results}
\label{sec:eval-e2e}

\begin{figure*}[t]
  \centering
  \includegraphics[width=\textwidth,trim=0.32in 0 0.06in 0,clip]{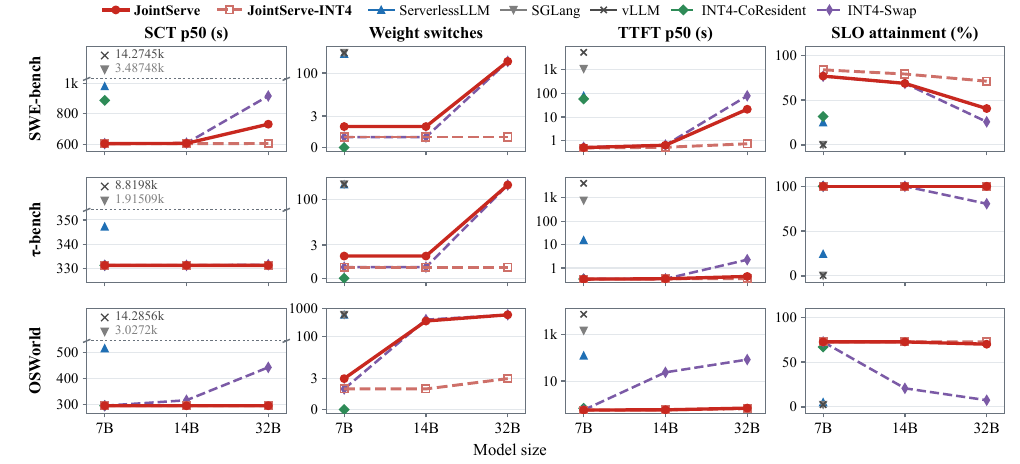}
  \caption{Real agent traces. JointServe
  wins on all three workloads.}
  \label{fig:agent-replay}
\end{figure*}

\textbf{JointServe dominates across all pressure regimes}
(Fig.~\ref{fig:budget-sct}). In the \emph{tight} regime ($b{=}0.4$,
24\,GB usable), JointServe completes the median session in 516\,s,
reducing SCT by 21.7--72.9\% over the stronger swapping baselines and
by 15$\times$ over vLLM. The advantage comes
from spending the scarce bytes on high-value objects rather than on
whichever whole model happens to be resident. In the \emph{crash}
regime ($b{=}0.2$, 8\,GB usable),
every baseline that must place a whole model does not finish within
the 35{,}000\,s measurement window; JointServe, which needs only a
$k^\ast$ shard per active model, completes in 758\,s, a 46$\times$
gap against the least-affected baseline's truncated reading. 

When capacity is ample ($b{=}1.0$), JointServe still reduces SCT
by 37.5\% relative to ServerlessLLM, while matching INT4-CoResident,
for which all weights already fit in HBM. This convergence is
expected: once residency ceases to be a constraint, there is little
room for sharding to improve capacity utilization. In contrast,
SGLang remains 8.3$\times$ slower than JointServe even at full HBM
capacity, indicating that its bottleneck lies in the storage-backed
reload path rather than in HBM capacity itself.

\textbf{Quantization composes with, and does not replace, sharding.}
At 7B under the tight budget, quantizing a whole-model swap reduces
transfer cost but retains the whole-model loading barrier;
JointServe-INT4 remains about 21\% faster than INT4-Swap. As the
budget increases, the two converge to within 2\%, since both
configurations can keep their working sets resident. The distinction
grows with model size: at 32B, JointServe-INT4 is 2.3$\times$ faster
than the best feasible alternative in the tight regime. Under the
crash budget, FP16 sharding and INT4 whole-model swapping both cross
the capacity limit, while JointServe-INT4 remains feasible at both
14B and 32B. Thus, quantization lowers the cost of movement, whereas
sharding removes the whole-model capacity requirement; combining the
two is the only evaluated configuration that remains robust as model
size grows.

\textbf{Real agent traces confirm the advantage outside the synthetic
sweep} (Fig.~\ref{fig:agent-replay}). On \emph{SWE-bench}, where
execution is dominated by model switching, JointServe reduces SCT by
49.1\% and raises SLO attainment from 2.8\% to 76.8\%, while reducing
physical model swaps from 478 to 2 per run. On \emph{$\tau$-bench},
the primary benefit shifts from completion time to responsiveness:
median TTFT improves by 45$\times$ and SLO attainment reaches 100\%,
whereas SCT improves by only 5.2\%. This difference is structural.
$\tau$-bench interleaves model calls with long user-thinking
intervals, allowing much of the baseline's swap latency to hide
inside otherwise idle periods; those swaps still delay the first
token of each response. On
\emph{OSWorld}, whose 16\,s-scale tool gaps are the friendliest
regime for prefetching, TTFT p50 falls from 31.7\,s to 0.1\,s, SLO
attainment goes from 2.8\% to 100\%, and SCT drops 43.3\%. Finally, replaying the TraceLab production
coding-agent trace reduces SCT by 51\%, confirming that the advantage
extends beyond the benchmark workloads.

Cost follows SCT: at \$2 per GPU-hour, serving
1{,}000 sessions costs \$10.30 with JointServe against \$14.67 for the
strongest baseline, a 30\% reduction from the same runs.

\begin{figure}[t]
  \centering
  \includegraphics[width=\columnwidth]{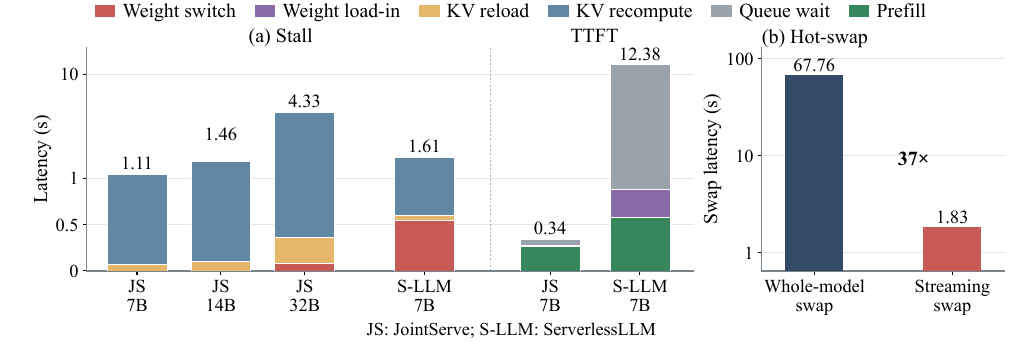}
  \caption{(a) Per-request stall breakdown at (b{=}0.4) (left) and TTFT breakdown (right). (b) Expert-swap behavior: JointServe overlaps layer streaming with prefill, while the default engine performs a blocking reload.}
  \label{fig:stall}
\end{figure}

\subsection{Breaking Down the Speedup}
\label{sec:eval-stall}

\textbf{Stall decomposition} (Fig.~\ref{fig:stall}a, left) shows where
JointServe's advantage is manufactured. At $b{=}0.4$,
ServerlessLLM pays 0.55\,s of weight-switch stall per request;
JointServe pays 0.00\,s at 7B and 14B and 0.11\,s at 32B, because
intent-driven prefetch moves the bytes before the request exists. On
the KV side, the dual-state tier replaces a 1.03\,s GPU recompute with
a 0.05\,s host reload, a 20$\times$ cost difference per switch-back.
When a request waits on several resources at once, we attribute the
wait once, in weight-switch, KV-reload, KV-recompute order.

\textbf{TTFT anatomy} (Fig.~\ref{fig:stall}a, right) decomposes the
$\tau$-bench result: JointServe's 0.343\,s TTFT consists of 0.007\,s
residency wait, 0.023\,s queue wait, and execution, while
ServerlessLLM's 12.385\,s is dominated by 11.497\,s of queue wait.
That gray segment bundles every way a request can be blocked by other
sessions, including their compute, their swaps, and their KV restores.
We do not claim it is all switching cost, but JointServe's 0.007\,s
residency wait is the hard floor on how much of it can be.
SWE-bench shows the same shape at the extreme: 137.0\,s of the
baseline's 137.5\,s median TTFT is waiting.

\textbf{Hot-swap microbenchmark and overhead}
(Fig.~\ref{fig:stall}b). On the A800, JointServe's streaming hot-swap
makes an expert usable in 1.83\,s (prefill starts at 0.11\,s; the
remaining 1.72\,s of layer copies overlap with compute), against
67.76\,s for the default engine path (43.31\,s NVMe read plus
24.45\,s H2D), a 37$\times$ reduction that translates into an 84\%
TTFT cut on swap-bound requests. The
control plane is effectively free: residency selection costs 811\,\textmu s
per 32-request batch, the carve-out step 0.5\,\textmu s per 64
requests, and signal extraction about 1.2\,\textmu s per request,
amortizing to 27.5\,\textmu s per request, under 0.1\% of a decode
step.

\subsection{Ablation}
\label{sec:eval-ablation}

\input{eval_ablation_table}

Table~\ref{tab:ablation} removes each pillar at $b=0.4$ on the standard workload, using the same code snapshot. Without \emph{partial residency}, SCT rises 18.0\% and weight switches increase from 98 to 257 because whole models must be transferred. Without the \emph{host KV tier}, SCT rises 53.8\% because switch-backs recompute on the GPU: KV recomputes increase from 106 to 286 and reloads fall to zero. Removing \emph{adaptive control} (intent prefetch, the carve-out budget, and $\kappa$-floor scheduling) raises SCT 24.3\% (516\,$\to$\,641\,s): transfers return to the request path and bulk movement delays in-flight KV restores.

At the crash budget ($b=0.2$), only 8\,GB remains usable after the runtime reserve, so no expert's full weights fit. Without partial residency, the run exceeds the 35{,}000\,s measurement window, as do the whole-model baselines. Without adaptive control, a fixed 50/50 split gives weights only 4\,GB, below the 4.5\,GB startup shard ($k^*$), and leaves no legal plan alongside in-flight KV. JointServe shifts capacity toward weights before a switch and back to KV during generation, finishing in 758\,s.

The controls have a super-additive effect: they are individually near-neutral on the steady workload ($-4.7\%$ for a fixed split, $+0.5\%$ without prefetch, and $+0.1\%$ without $\kappa$-floor), but removing them together raises SCT 24.3\% and crosses the feasibility boundary at $b=0.2$. The carve-out converges to its predicted 12.0\,GB equilibrium, and the residency selector's measured optimality gap averages 5.1\%.

%% file: eval_ablation_table.tex
\begin{table}[t]
  \centering
  \caption{Pillar ablation on the standard workload.}
  \label{tab:ablation}
  \setlength{\tabcolsep}{3pt}
  \footnotesize
  \begin{tabular}{@{}lrrrr@{}}
    \toprule
    Variant & SCT$_{p50}$ (s) & weight sw. & KV reloads & KV rec.\ \\
    \midrule
    JointServe (full) & 516 & 98 & 180 & 106 \\
    w/o Partial Residency & 608 (+18.0\%) & 257 & 187 & 105 \\
    w/o Host KV Tier & 793 (+53.8\%) & 98 & 0 & 286 \\
    w/o Adaptive Control & 641 (+24.3\%) & 115 & 189 & 103 \\
    \bottomrule
  \end{tabular}
\end{table}

%% file: discussion.tex
\section{Discussion}
\label{sec:discussion}

\textbf{Capacity determines gains and feasibility.} At loose budgets, compute saturates and residency policies converge; at $b{=}0.4$, JointServe saves 21.7\% of session time; at the crash budget, whole-model baselines do not finish within the measurement window. SGLang still has 8.3$\times$ JointServe's median SCT at full capacity because its storage-backed reload cost remains.

\textbf{Gap structure shifts the benefit.} Switch-bound SWE-bench mainly gains completion time, $\tau$-bench gains interactivity, and OSWorld's long tool gaps hide much of the transfer. Intent-driven prefetch reduces exposed stall, so end-to-end gains depend on the baseline's critical path.

\textbf{Adaptive allocation preserves feasibility.} The control mechanisms are super-additive: nearly neutral individually but decisive together. Shifting capacity between weights and KV fits working sets that a fixed split cannot hold at the crash budget. Steady-state speed alone understates this benefit.

%% file: related.tex
\section{Related Work}
\label{sec:related}

\emph{Model and request placement.}
ServerlessLLM~\cite{fu2024serverlessllm} speeds loading; BlitzScale~\cite{zhang2025blitzscale} loads layers for autoscaling. Talaria~\cite{meng2026talaria} routes by residency, KV locality, and pressure, preserving return continuity through soft reservations, mid-slot admission, host-restorable KV, and staged weights. MuxServe and Llumnix~\cite{arxiv240402015,arxiv240603243} multiplex or schedule instances. JointServe selects weight shards and KV blocks inside one GPU, retains startup prefixes, and streams missing groups.

\emph{KV and model families.}
Pensieve and HCache~\cite{yu2025stateful,gao2025fast} preserve or restore KV; MORI~\cite{xia2026idleness} moves it during idle windows. CacheBlend~\cite{arxiv240516444} reuses KV; CacheGen~\cite{arxiv231007240} compresses and streams it. InfiniGen, H$_2$O, StreamingLLM, SnapKV, and PyramidKV~\cite{arxiv240619707,arxiv230614048,arxiv230917453,arxiv240414469,arxiv240602069} manage long-context KV, and KIVI~\cite{arxiv240202750} quantizes it. AWQ and GPTQ~\cite{lin2024awq,frantar2022gptq} compress weights; S-LoRA and Punica~\cite{sheng2023s,arxiv231018547} serve shared-base adapters; DeltaZip~\cite{yao2025deltazip} compresses fine-tuned deltas. JointServe instead jointly allocates distinct-expert weights and session KV.

\emph{Agent execution.}
Parrot and Autellix~\cite{lin2024parrot,luo2025autellix} expose or schedule agent structure; NanoFlow~\cite{zhu2025nanoflow} overlaps pipeline operations. JointServe prefetches the named expert before request arrival; their combination remains untested.

%% file: limitations.tex
\section{Limitations and Future Work}
\label{sec:limitations}

JointServe serves heterogeneous, full-parameter experts on memory-constrained GPU; shared-base experts better suit adapters. Intent extraction assumes that tool-call headers are visible in the caller's output stream, as in structured function calling. The residency objective estimates reuse and recovery cost over a decision window. The measured gains depend on transfer bandwidth, prefill cost, and available overlap, so their magnitude may differ on other hardware or agent traces. Future work can pair cluster replica counts with per-instance residency, use intermediate activations as a third KV recovery source between host reload and recompute, and adapt the currently fixed selection precision and re-solving cadence to observed estimation error.

%% file: conclusion.tex
\section{Conclusion}
\label{sec:conclusion}

Serving multi-model agents on a memory-constrained GPU forces weights
and KV state to compete for the same bytes. JointServe treats that
competition as one scheduling problem. A unified residency formulation
values weight shards and KV blocks on a single recovery-cost scale;
signals read from the agent's own output stream move bytes before
requests exist; and a sharded runtime swaps an expert in 1.83\,s where
a full reload takes 67.76\,s. Across memory budgets and three real
agent workloads, this keeps sessions interactive where whole-model
systems stall, and keeps them feasible where those systems cannot run
at all. As agents assemble ever more specialized models onto fixed
hardware, we believe residency, not raw capacity, becomes the resource
worth scheduling.

%% file: appendix.tex
\section{Joint Selection Guarantee}
\label{app:guarantee}

This appendix establishes the recovery-value guarantee for
the joint residency selection algorithm in the main paper. We first specify the residency model and
its recovery costs, then derive the joint budget decomposition and
bound the losses from candidate compression, value rounding, and
residual-capacity selection. Appendix~\ref{app:oracle} relates this
guarantee to estimated values. Appendix~\ref{app:execution} covers
structured residency choices and transfer readiness.

\subsection{Snapshot model}
\label{app:model}
Fix one decision snapshot. After reserving runtime space and protected
objects, the free capacity is $B_f$. Optional weights $\mathcal W$ and
KV objects $\mathcal K$ have positive sizes $s_i$ and nonnegative
additive values $u_i$. These candidates exclude the protected set.
Discard individually infeasible and zero-value
objects. Write $s(S)=\sum_{i\in S}s_i$, $U(S)=\sum_{i\in S}u_i$, and
$\mathcal X=\{S\subseteq\mathcal W\cup\mathcal K:s(S)\le B_f\}$.
The comparison optimum is $\mathrm{OPT}=\max_{S\in\mathcal X}U(S)$,
using the same candidates, values, protection, and capacity as the
algorithm. The empty set is optimal when no positive-value candidate
remains. The execution layer handles infeasible protection.

Candidate independence means that selecting one object changes
neither another object's size nor its recovery value. It does not
require independent reuse events: expectations of additive recovery
work remain additive even when calls are correlated. Recovery units
must therefore avoid overlapping charges for the same prefix.
Protected value is constant and excluded from the objective.
The analysis holds for any fixed nonnegative input values. With
estimated inputs, replace $u_i$, $U$, and $\mathrm{OPT}$ by
$\hat u_i$, $\hat U$, and $\widehat{\mathrm{OPT}}$, respectively.

\subsection{Recovery costs and value density}
\label{app:cost-density}
The main paper's cost model assigns each object the work
required by its recovery path. Reloading $s$ bytes costs a startup
term plus $s/\mathrm{BW}_{H2D}$. At fixed reuse probability $p$, its
value per byte is $p(\alpha/s+1/\mathrm{BW}_{H2D})$, so startup
overhead contributes more per byte for small transfers.

Recomputation has a different dependence on size. Consider one
full-prefix recovery unit of $n>0$ tokens with $s=\beta_m n$ bytes,
where $\beta_m$ is model $m$'s KV footprint per token. For the fit
$r_m(n)=a_{0,m}+a_mn+b_mn^2$ defined in the main paper, its density is
\begin{equation}
 \begin{split}
 d_m(n)&=\frac{p}{\beta_m}
          \left(\frac{a_{0,m}}{n}+a_m+b_mn\right),\\
 d_m'(n)&=\frac{p}{\beta_m}
          \left(b_m-\frac{a_{0,m}}{n^2}\right).
 \end{split}
 \label{eq:app-density}
\end{equation}
For $p>0$, $a_{0,m}>0$, and $b_m>0$, density decreases up to
$n=\sqrt{a_{0,m}/b_m}$ and increases thereafter. The turning point
depends on the model's fitted costs, and different reuse
probabilities can change the ordering again.

Equation~\eqref{eq:app-density} charges one full-prefix recovery.
When a usable prefix is already present, a missing unit receives
only the incremental recovery work over its token range. A shared
startup cost is charged once, rather than once per physical block.
Taking the cheaper of reload and recompute as an object's cost
assumes that this choice leaves other objects' recovery costs
unchanged. Prefix dependencies instead require the legal options
of Appendix~\ref{app:groups}.

\subsection{Joint budget decomposition}
\label{app:decomposition}

Let $V_W(b)$ and $V_K(b)$ be the exact optima for each object type.
Every joint solution induces a split $b=s(S\cap\mathcal W)$; conversely,
two solutions within complementary budgets have a feasible union.
Consequently,
\begin{equation}
 \mathrm{OPT}=\max_{0\le b\le B_f}\{V_W(b)+V_K(B_f-b)\}.
 \label{eq:app-split}
\end{equation}
The algorithm evaluates discrete representatives of this joint choice.
High-value KV remains KV when accounting for physical occupancy.

\subsection{A feasible value scale}
\label{app:scale}
The value scale $L$ sets the KV threshold and bounds the number of
high-value objects that an optimum can retain.
Let $G$ be the value of the density-ordered feasible prefix, stopping
at the first object that does not fit, and $M$ the best feasible
singleton. Set $L=\max(G,M)$. The fractional knapsack relaxation adds
at most one boundary item to the prefix, so
$L\le\mathrm{OPT}\le G+M\le2L$.
The lower bound holds because both the prefix and the singleton are
feasible. If all candidates fit, the prefix itself is optimal.

\subsection{Compressing high-value KV candidates}
\label{app:construction}
Choose positive budgets $\varepsilon_C,\varepsilon_S,\varepsilon_K$
with sum $\varepsilon<1$, and set $\tau=\varepsilon_KL$.
KV with $u_i>\tau$ enters the high-value set. Partition it into buckets
$[\tau a^j,\tau a^{j+1})$ for $j\ge0$, where
$a=(1-\varepsilon_C)^{-1}$, retaining
at most $h=\lceil2/\varepsilon_K\rceil$ smallest objects per bucket.
Objects with $u_i=\tau$ remain in the low-value set.

An optimal solution contains fewer than $2/\varepsilon_K$ high-value
KV objects, since each contributes more than $\varepsilon_KL$ and
their total value is at most $2L$. If the optimum selects $k$ objects
from a bucket, the $k$ smallest objects are all retained and occupy
no more bytes. Each replacement has at least $1/a=1-\varepsilon_C$
times the value of the object it replaces. Replacing the selected
objects in every bucket thus preserves feasibility and loses at
most an $\varepsilon_C$ fraction of the high-value KV contribution.
This argument uses independently selectable objects. Compression
changes the candidate set searched by the optimizer. Physical
eviction follows only after a residency plan is selected.

\subsection{Minimum-byte representatives}
\label{app:frontier}
Value rounding bounds the dynamic program's state space while
retaining exact byte sizes for capacity checks.
The slow set contains all weights and the retained high-value KV.
Let its size be $m$ and its maximum item value $M_S$. For $m>0$, set
$\lambda=\varepsilon_SM_S/m$ and $q_i=\lfloor u_i/\lambda\rfloor$.
A dynamic program stores the minimum true bytes at each rounded value:
\begin{equation}
 D_j(q)=\min\{D_{j-1}(q),D_{j-1}(q-q_j)+s_j\},
 \label{eq:app-dp}
\end{equation}
with $D_0(0)=0$ and other initial states infinite. A branch with a
negative value index is omitted. Each item is used once, and states
exceeding $B_f$ are removed. Store traceback information
and the representative's unrounded value. For every feasible slow set
$T$, the state $q=\sum_{i\in T}q_i$ has a representative $R$ satisfying
\begin{equation}
 s(R)\le s(T),\qquad U(R)\ge U(T)-m\lambda
                    =U(T)-\varepsilon_SM_S.
 \label{eq:app-representative}
\end{equation}
To see this, $T$ witnesses a state with rounded value
$q=\sum_{i\in T}q_i$, so its minimum-byte representative uses no
more space. Its value is at least $\lambda q$, whereas rounding
loses less than $\lambda$ per selected object.
An empty slow set contributes only the empty representative.

\subsection{Residual selection and joint approximation}
\label{app:proof}
Sort low-value KV by decreasing density. Let $F_L(r)$ be its longest
complete prefix fitting residual capacity $r$. The fractional
relaxation differs by at most one item of value at most $\tau$, giving
$U(F_L(r))\ge V_L(r)-\tau$, where $V_L(r)$ is the exact low-value
optimum. Compare every union $R\cup F_L(B_f-s(R))$ using its unrounded
value and return the best, $S_A$.

\begin{theorem}
\label{thm:joint}
For a fixed independent candidate set with nonnegative additive
values and one capacity constraint, joint selection is feasible and
$U(S_A)\ge(1-\varepsilon)\mathrm{OPT}$.
\end{theorem}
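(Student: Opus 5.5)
The plan is to fix an optimal set $O\in\mathcal X$ with $U(O)=\mathrm{OPT}$ and exhibit one candidate union that the algorithm compares, whose value is at least $(1-\varepsilon)\mathrm{OPT}$. Since $S_A$ is the best of all compared unions, this gives the bound. Split $O$ by object class into $O_W=O\cap\mathcal W$, $O_H=O\cap\mathcal K_H$ (KV with $u_i>\tau$) and $O_L=O\cap\mathcal K_L$. Each class is charged to exactly one error budget. $O_H$ pays for compression ($\varepsilon_C$), $O_W\cup O_H$ pays for rounding ($\varepsilon_S$), and $O_L$ pays for the prefix rule ($\varepsilon_K$). The degenerate cases ($B_f\le0$, no surviving candidates, $m=0$) return the empty or protected target, and there $\mathrm{OPT}=0$, so they are trivial.

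\emph{Feasibility.} Every representative $R$ stored by the dynamic program of Appendix~\ref{app:frontier} has true size $s(R)\le B_f$, because states exceeding $B_f$ are removed. $F_L(B_f-s(R))$ fits the residual capacity by definition. $R$ lies in the slow set ($\mathcal W\cup\mathcal K_H'$) and $F_L$ lies in $\mathcal K_L$; these are disjoint, so sizes and values add. Every compared union, and in particular $S_A$, is therefore feasible.

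\emph{Value chain.} There are three steps.
\begin{enumerate}
\item By Appendix~\ref{app:construction}, $|O_H|<2/\varepsilon_K\le h$. Bucketwise replacement therefore yields $O_H'\subseteq\mathcal K_H'$ with $s(O_H')\le s(O_H)$ and $U(O_H')\ge(1-\varepsilon_C)U(O_H)$.
\item Set $T=O_W\cup O_H'$. Then $T$ is a feasible slow set with $s(T)\le s(O_W)+s(O_H)$. The state $q=\sum_{i\in T}q_i$ therefore survives pruning, and by \eqref{eq:app-representative} it has a representative $R$ with $s(R)\le s(T)$ and $U(R)\ge U(T)-\varepsilon_S M_S$. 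Every slow item is individually feasible, so $M_S\le M\le L\le\mathrm{OPT}$ by Appendix~\ref{app:scale}.
\item The residual satisfies $r=B_f-s(R)\ge B_f-s(O_W)-s(O_H)\ge s(O_L)$. Hence $V_L(r)\ge U(O_L)$, and the fractional-relaxation bound gives $U(F_L(r))\ge U(O_L)-\tau\ge U(O_L)-\varepsilon_K\mathrm{OPT}$, using $\tau=\varepsilon_K L\le\varepsilon_K\mathrm{OPT}$.
\end{enumerate}
Summing the three steps, and using $U(O_H)\le\mathrm{OPT}$, gives
\[
U(R\cup F_L(r))\ \ge\ U(O_W)+(1-\varepsilon_C)U(O_H)+U(O_L)-(\varepsilon_S+\varepsilon_K)\mathrm{OPT}\ \ge\ (1-\varepsilon)\mathrm{OPT}.
\]

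\emph{Main obstacle.} The delicate step is the second one. The representative must be one that the algorithm actually enumerates and evaluates with its \emph{unrounded} value. I will check three points here. First, $T$'s rounded state is reachable under the 0/1 recurrence \eqref{eq:app-dp}. Second, pruning by $B_f$ never discards it, which follows because the stored minimum is at most $s(T)\le B_f$. Third, the rounding loss is below $\lambda$ per item over at most $m$ items, so the total loss is at most $\varepsilon_S M_S$. I also need to confirm that the replacement in the first step keeps the KV set disjoint from $O_W$ and from $\mathcal K_L$; this holds because the replacements are drawn only from retained objects in $\mathcal K_H$. This step is also where candidate independence is used: sizes and values must not change when objects are swapped.
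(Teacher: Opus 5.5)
Your proposal is correct and follows the paper's proof. It uses the same split of an optimum into weights, high-value KV, and low-value KV. Compression is charged to $S_H^*$, rounding to the slow set, and the prefix query to $S_L^*$, and the argument closes by comparing $S_A$ against one enumerated union $R\cup F_L(B_f-s(R))$ using $U(S_H^*),M_S,L\le\mathrm{OPT}$.

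One small slip concerns the case $m=0$. It does not force $\mathrm{OPT}=0$, because all surviving KV can fall below $\tau$ when $G$ is large. Your main chain already covers this case: the only representative is $\emptyset$, with no rounding loss, and your step~3 gives $U(F_L(B_f))\ge(1-\varepsilon_K)\mathrm{OPT}$.
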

\begin{proof}
Partition an optimum into weights $S_W^*$, high-value KV $S_H^*$,
and low-value KV $S_L^*$. Compression replaces $S_H^*$ without
increasing bytes and loses at most $\varepsilon_CU(S_H^*)$.
Equation~\eqref{eq:app-representative} gives a slow representative
using no more space, with further loss at most $\varepsilon_SM_S$.
The entire $S_L^*$ still fits in its residual capacity, where the
prefix query loses at most $\tau$. This union is one of the compared
candidates, so
\begin{equation}
\begin{split}
 U(S_A)&\ge\mathrm{OPT}-\varepsilon_CU(S_H^*)
                    -\varepsilon_SM_S-\varepsilon_KL\\
       &\ge(1-\varepsilon)\mathrm{OPT}.
\end{split}
\end{equation}
The last inequality uses $U(S_H^*),M_S,L\le\mathrm{OPT}$.
The two parts are disjoint, and each residual query enforces the true
byte budget. Adding back the protected objects preserves feasibility.
\end{proof}

\subsection{Cost and stopping certificate}
\label{app:complexity}
Compression limits the dynamic program's dependence on the number
of KV candidates. For $N$ optional objects and $n_W$ weights, at most
\[J=1+\left\lceil\frac{\ln(2/\varepsilon_K)}{-\ln(1-\varepsilon_C)}\right\rceil\]
high-value buckets give $m\le n_W+hJ$. The rounded state count is
$Q=O(m^2/\varepsilon_S)$. Sorting, the slow program, and binary-search
prefix queries take
$O(N\log N+m^3/\varepsilon_S+Q\log N)$ time.
Rolling values need $O(Q)$ space; a full traceback table needs $O(mQ)$.
For any feasible solution of value $V$, a fractional upper bound
$U_{LP}$ over the original, uncompressed candidates certifies
$V/\mathrm{OPT}\ge V/U_{LP}$ when $U_{LP}>0$. A certificate reaching
the requested ratio permits early termination. Both the certificate
and the approximation theorem concern the input objective.
Appendix~\ref{app:oracle} relates guarantees on estimated values
to actual recovery benefit.

\section{Estimated Values and Oracle Gap}
\label{app:oracle}
The selector uses predicted reuse and recovery costs. This appendix
separates the resulting estimation error from optimization error,
changes between decisions, and the advantage of knowing future
requests. All comparisons use the same feasible family
$\mathcal X$ from Appendix~\ref{app:model}.

\subsection{Error in estimated retention values}
\label{app:estimation}
Let $U$ be the true conditional expected recovery benefit and
$\hat U$ the estimated objective. Set $\alpha=1-\varepsilon$.
If $\sup_{S\in\mathcal X}|U(S)-\hat U(S)|\le E$, then
Theorem~\ref{thm:joint} gives
\begin{equation}
\begin{split}
 U(S_A)&\ge\hat U(S_A)-E
       \ge\alpha\max_S\hat U(S)-E\\
       &\ge\alpha\mathrm{OPT}-(1+\alpha)E.
\end{split}
\label{eq:app-error}
\end{equation}
Thus optimization loses at most an $\varepsilon$ fraction of the true
optimum, while the input error contributes an additive term
$(2-\varepsilon)E$.

Per-object errors $|u_i-\hat u_i|\le e_i$ imply
$E\le\max_{S\in\mathcal X}\sum_{i\in S}e_i\le\sum_i e_i$.
For $u_i=p_ic_i$ and $\hat u_i=\hat p_i\hat c_i$, bounds
$|p_i-\hat p_i|\le\delta_i^p$ and
$|c_i-\hat c_i|\le\delta_i^c$ give
\begin{equation}
 |u_i-\hat u_i|\le c_i\delta_i^p+\hat p_i\delta_i^c.
 \label{eq:app-object-error}
\end{equation}
This connects errors in the Signal Engine and recovery-cost model
to the selection bound. If all object values instead satisfy a
relative error bound $(1-\zeta)u_i\le\hat u_i\le(1+\zeta)u_i$
for $0\le\zeta<1$, the same comparison yields
$U(S_A)\ge\alpha(1-\zeta)\mathrm{OPT}/(1+\zeta)$.

\subsection{Reusing a plan as values change}
\label{app:drift}
A plan computed at time $s$ can be evaluated at a later time $t$
by accounting for accumulated changes in its input values. If its
initial error is at most $E_s$ and
$\sup_{S\in\mathcal X}|U_{j+1}(S)-U_j(S)|\le d_j$, the triangle
inequality gives
\begin{equation}
 E_t\le E_s+\sum_{j=s}^{t-1}d_j.
 \label{eq:app-drift}
\end{equation}
Substituting $E_t$ into Equation~\eqref{eq:app-error} bounds the
value of the reused plan. This argument requires unchanged
candidates, protection, and capacity. A change to those constraints
requires a new feasibility check and selection over the updated
family, as in the main paper's joint budget allocation.

\subsection{Future information and execution loss}
\label{app:information}
Let $R(S,\omega)$ denote the recovery benefit of set $S$ under
future outcome $\omega$, and let $\mathcal I$ be the information
available at decision time. The snapshot objective and a
future-aware oracle have values
\begin{equation}
 \begin{split}
 U(S)&=\mathbb E[R(S,\omega)\mid\mathcal I],\\
 O&=\mathbb E[\max_{S\in\mathcal X}R(S,\omega)\mid\mathcal I].
 \end{split}
 \label{eq:app-oracle-values}
\end{equation}
The oracle chooses after observing the outcome, giving it an
information advantage $I=O-\mathrm{OPT}\ge0$ over the snapshot
optimum.
If execution reduces expected recovery benefit by at most $D$, then
\begin{equation}
 U_{\mathrm{actual}}\ge\alpha(O-I)-(1+\alpha)E-D.
 \label{eq:app-oracle}
\end{equation}
Even exact expected values do not eliminate $I$: when capacity holds
one object and an unannounced request selects uniformly among $N$
objects, advance selection obtains $1/N$ of the clairvoyant benefit.
Thus the static approximation alone specifies no fixed percentage of
a future-aware serving oracle.

\section{Legal States and Execution}
\label{app:execution}
The independent-object model captures the selection decision, while
the runtime must respect legal residency states and transfer
dependencies. This appendix gives an exact KV query for equal-size
objects, a grouped formulation for dependent choices, and the
deadline condition used to reason about streaming readiness. It
also explains how recovery value relates to serving latency.

\subsection{Equal-size KV objects}
\label{app:equal-kv}
When all KV objects are independent and have the same size $a_K$,
the residual-capacity query is exact. Sort them by decreasing value
and let $H(k)$ be the sum of the first $k$ values, with $H(0)=0$.
For $n_K$ KV objects and residual capacity $r$,
\begin{equation}
 V_K(r)=H\!\left(\min\{n_K,\lfloor r/a_K\rfloor\}\right).
 \label{eq:app-top-k}
\end{equation}
Exchanging a selected object for a higher-value unselected object
preserves size and increases value, which proves optimality.
After sorting and prefix-sum construction, each query takes $O(1)$
time.

This variant builds the frontier over weights alone and uses
Equation~\eqref{eq:app-top-k} for the remaining capacity. If its
rounding parameter is $\varepsilon_W$ and the largest feasible
weight value is $M_W$, Equation~\eqref{eq:app-representative} gives
$U(S_A)\ge\mathrm{OPT}-\varepsilon_WM_W
\ge(1-\varepsilon_W)\mathrm{OPT}$. No KV compression or prefix
approximation is needed. With no optional weights, the KV query
alone is optimal.

\subsection{Structured residency choices}
\label{app:groups}
With prefix or mutual-exclusion constraints, enumerate legal
options $(s_{g\ell},u_{g\ell})$ per model or session. Each option
specifies a complete legal choice for that group beyond its
protected state, including a zero option. Groups must have additive
sizes and values, with no shared bytes or duplicated recovery
benefit. Choose exactly one option per group.

For $m_g$ groups and largest positive, individually feasible option
value $M_g$, set $\lambda_g=\varepsilon M_g/m_g$ and
$q_{g\ell}=\lfloor u_{g\ell}/\lambda_g\rfloor$. The minimum-byte
recurrence is
\begin{equation}
 D_g(q)=\min_\ell\{D_{g-1}(q-q_{g\ell})+s_{g\ell}\}.
 \label{eq:app-group-dp}
\end{equation}
Initialize $D_0(0)=0$ and all other states to infinity, omit negative
indices, and discard states exceeding $B_f$. If every option has
zero value, return the zero options. Otherwise, return the
representative with the largest feasible rounded value.

Every feasible solution chooses $m_g$ options, so rounding loses at
most $m_g\lambda_g=\varepsilon M_g\le\varepsilon\mathrm{OPT}$,
where $\mathrm{OPT}$ now ranges over legal grouped choices.
The minimum-byte representative uses no more capacity, yielding
a $(1-\varepsilon)$ approximation. This formulation enforces
dependencies through the options themselves and does not use
cross-group candidate substitution. For $A_g$ explicitly enumerated
options, the dynamic program takes $O(A_gm_g^2/\varepsilon)$ time.
The bound is in the size of that option list; enumerating arbitrary
dependency structures may require exponentially many options.

\subsection{Transfer feasibility and first-use deadlines}
\label{app:deadlines}   
Streaming can start from a resident prefix if the remaining groups
arrive before their first use. Suppose all transfers are available
at intent time, defined as time zero, and use one reserved channel
at a fixed effective rate. Destination space and temporary buffers
must remain feasible throughout. Let $t_j$ be each transfer's
processing time and $d_j$ its first-use deadline, ordered so that
$d_1\le\cdots\le d_J$. A schedule meeting every deadline exists
exactly when
\begin{equation}
 \sum_{j=1}^{k}t_j\le d_k,\qquad k=1,\ldots,J.
 \label{eq:app-deadlines}
\end{equation}
Necessity follows because all transfers with deadlines at most
$d_k$ must finish by that time. For sufficiency, execute transfers
in deadline order; each completion time is its prefix sum and
therefore meets its deadline.

For layer streaming, a deadline is the intent lead time plus the
computation preceding that layer on a stall-free execution path.
KV transfers use the deadline of their first consuming operation.
Checking every prefix matters: total transfer time below total
prefill time can still leave an early layer waiting. The priority
scheduler described in the main paper must also respect write-back
dependencies, later release times, and contention for shared
resources. Execution waits on per-group ready events whenever a
required copy is incomplete. The condition in
Equation~\eqref{eq:app-deadlines} establishes schedulability under
the common-release and reserved-channel assumptions.

\subsection{From recovery value to serving latency}
\label{app:latency}
The selection guarantee bounds avoided recovery work. Its effect on
TTFT and SCT depends on which transfers overlap computation and
which operations remain on the request's critical path. Even in a
serial model with $C=C_0-U$, where $C_0$ is the cost with no optional
recovery benefit, $U(S_A)\ge\alpha\mathrm{OPT}$ implies
\begin{equation}
 C_A\le\alpha C^*+(1-\alpha)C_0,\qquad
 C^*=C_0-\mathrm{OPT},\quad\alpha=1-\varepsilon.
 \label{eq:app-latency}
\end{equation}
The additive term depends on $C_0$, so a value approximation does
not imply the same multiplicative latency bound. Queueing and
overlap further depend on the serving schedule. The main paper's evaluation
therefore measures TTFT and SCT directly.